\title{Breaking the Bellman-Ford Shortest-Path Bound
\footnote{This paper is based upon work supported by Science, Technology, \& Innovation Funding Authority (STDF) under grant number 45542}}
\author{Amr Elmasry \\  Egypt University of Informatics and Alexandria University \\ amr.elmasry@eui.edu.eg and elmasry@alexu.edu.eg}
\newtheorem{theorem}{Theorem}
\newtheorem{lemma}{Lemma}
\newtheorem{corollary}{Corollary}
\begin{document}

\date{}
\maketitle
In this paper we give a single-source shortest-path algorithm that breaks, after over 65 years, the $O(n \cdot m)$ bound for the running time of the Bellman-Ford-Moore algorithm, where $n$ is the number of vertices and $m$ is the number of arcs of the graph. Our algorithm converts the input graph to a graph with nonnegative weights by performing at most $\min(2 \cdot \sqrt{n},2 \cdot \sqrt{m/\log n})$ calls to a modified version of Dijkstra's algorithm, such that the shortest-path trees are the same for the new graph as those for the original. 
When Dijkstra's algorithm is implemented using Fibonacci heaps, the running time of our algorithm is therefore $O(\sqrt{n} \cdot m + n \cdot \sqrt{m \log n})$.

\paragraph{keywords:} Combinatorial algorithms, Shortest paths, Labeling methods, Dijkstra's algorithm, Negative arcs, Negative cycles

\section{Introduction}

The shortest-path problem is one of the classical topics in graph algorithms. 
Consider a directed weighted graph $G= (V,A)$, $V$ is the list of $n$ vertices, $A$ is the list of $m$ arcs,  
and $\ell : A \to \mathbb{R}$ is a length function, where $\ell(u,v)$ is the weight of the arc $(u,v)$.
A shortest path between two vertices is a path of arcs with the minimum total weight. 
The single-source version is to find the shortest paths from a given source $s$ to all other vertices.
The shortest path is undefined if $G$ has a cycle with negative total weight accessible from the source. 
The objective is to get a shortest-path tree from $s$ to all the vertices in $G$, or to alert that $G$ has a negative cycle. 

Since Bellman \citep{bellman1958routing}, Ford \citep{ford1962flows}, and Moore \citep{moore1959shortest} have developed their $O(n \cdot m)$ shortest-path algorithm, several attempts were unsuccessful to break this asymptotic worst-case bound, except for some special cases.
Nevertheless, several heuristics were developed to outperform the Bellman-Ford-Moore algorithm in practice, still not surpassing the theoretical bound. These algorithms include \citep{ElmasryS19,pape1974implementation,pallottino1984shortest,goldberg1993heuristic,Tarjan81}.
If the graph contains cycles of negative weight, all the aforementioned algorithms would report it but most likely not as soon as possible. In the literature there are several algorithms with the primary objective of promptly detecting if a negative cycle exists \citep{cherkassky2009shortest,cherkassky1999negative,goldfarb1991shortest,Tarjan81,wong2005negative}. 
The asymptotic worst-case running times of all these algorithms are at least as the Bellman-Ford bound.   
Other algorithms that allow negative arc weights, but whose running times depend on the weights of the arcs, rely on a technique called {\it scaling} \cite{cohen2017negative,gabow1989faster,goldberg1995scaling}. 
Most notable is Goldberg's algorithm that runs in $O(\sqrt{n} \cdot m \log \mathcal{N})$ time, where $\mathcal{N}$ is the absolute value of the largest negative arc weight \citep{goldberg1995scaling}. 
The well-known Dijkstra algorithm only works efficiently for graphs with nonnegative arc weights \citep{dijkstra1959note}.
Dijkstra's algorithm when implemented using Fibonacci heaps runs in $O(m + n \log n)$ time \citep{fredman1987fibonacci}. 
If there are arcs with negative weights, Dijkstra's algorithm can be used but without a polynomial-time guarantee.
Some suggestions \citep{Dinitz2010,nakayama2013} are to handle the problem by repeatedly applying Dijkstra's algorithm to a series of subgraphs of the original. Such algorithms are referred to as Dijkstra-based algorithms. The running times of the Dijkstra-based algorithms depend on the number of negative arcs and their distribution within the graph. In particular, the algorithm in \citep{Dinitz2010} calls Dijkstra's algorithm at most $k+2$ times, where $k$ is the maximum number of negative arcs on a path of the shortest-path tree.
When the number of negative arcs is large, the performance of these Dijkstra-based algorithms degrades and they cannot compete with the other aforementioned algorithms in practice. 

On the other hand, more efficient shortest-path algorithms with better theoretical bounds exist for some special graph families like: planar graphs \citep{klein2010} and Euclidean graphs \citep{sedgewick1986}. 
If the arc weights are nonnegative integers, special priority queues were developed to solve the shortest-path problem more efficiently, depending on the weights of the arcs \citep{ahuja1990faster, thorup2000}.
Most notable is that the single-source shortest-path problem can be solved in $O(n+m)$ time for acyclic graphs using topological sorting \citep{cormen1998introduction}. 

Recently, Fineman \cite{Fine24} introduced a randomized single-source shortest-path algorithm that runs in expected {{\~{O}}$(m \cdot n^{8/9})$ time. Our algorithm is deterministic and asymptotically faster.

In this paper we introduce a new algorithm for the single-source shortest-path problem breaking the $O(n \cdot m)$ time bound that endured over 65 years. Our algorithm repeatedly calls: i) a linear-time shortest-path algorithm for an acyclic graph with nonpositive weights and ii) Dijkstra's algorithm on a graph with nonnegative weights. Afterwards, the original graph is transformed to a graph with nonnegative weights that has the same shortest-path trees as the original graph. One more call to Dijkstra's algorithm on this transformed graph would then finish the job. 
Our algorithm runs in $O(\sqrt{n} \cdot m + n \cdot \sqrt{m \log n})$ time. 

We have implemented our algorithm to anticipate its performance in practice, and ran it on a large set of test cases. Experimental results indicate that our algorithm is practically efficient, and much faster than all the competitors. We shall report these results later in a separate study.


\section{Background}

\subsection{Label correcting methods}
Most shortest-path algorithms are based on the general \textit{label-correcting} method \citep{Bertsekas1993,cormen1998introduction,gallo1988shortest,shier1981properties}. For every vertex $u$, the method maintains a potential value $d[u]$ (tentative distance from the source $s$) and a parent pointer $p[u]$, and update them throughout the algorithm. 
The method starts by setting $d[s]=0$;
and for every other vertex: $d[u]=\infty$, $p[u]=nil$. 

The \textit{scan} operation, defined on a vertex $u$, checks all outgoing arcs from $u$ for \textit{relaxation}.
For every vertex $v$ in the adjacency list of $u$, $u.adj[~]$, the arc $(u,v)$ is relaxed if $d[v] > \ell(u,v) + d[u]$ by setting $d[v] \leftarrow \ell(u,v) + d[u]$ and setting $p[v] \leftarrow u$.
The method works in rounds until no more arcs can be relaxed. For each round, the scan operation is applied to some and possibly all the vertices. Different strategies for selecting which vertices to be scanned and their scanning order lead to different algorithms.
When the algorithm terminates, if $G$ has no negative cycles, the final value $d[u]$ for every vertex $u$ is the shortest-path distance from $s$ to $u$, and the parent pointers constitute a shortest-path tree.
If $G$ has negative cycles, the label-correcting method can be easily modified to find such a cycle and terminate.

\subsection{The Bellman-Ford-Moore algorithm}
The Bellman-Ford-Moore algorithm scans, in each round, all the vertices of the graph in arbitrary order. Other than the source vertex, the number of vertices with correctly settled potentials is at least $i$ after the $i$-th round. In accordance, if there are no negative cycles, the shortest paths are correctly computed after at most $n-1$ rounds. The running time of the algorithm is $O(n \cdot m)$.

\algblockdefx{MRepeat}{EndRepeat}{\textbf{repeat}}{}
\algnotext{EndRepeat}

\begin{algorithm}[ht]
\caption{The Bellman-Ford-Moore Algorithm}
\begin{algorithmic}
\For{\texttt{all $u \in V$}}
        \State $d[u] \leftarrow \infty$
				\State $p[u] \leftarrow nil$
\EndFor			
\State $d[s] \leftarrow 0$
\MRepeat{ $n-1$ times}
	\For{\texttt{all $(u,v) \in A$ }}
            \If{$d[v] > \ell(u,v) + d[u]$}
							\State $d[v] \leftarrow \ell(u,v) + d[u]$
							\State $p[v] \leftarrow u$
            \EndIf
  \EndFor
\EndRepeat
\end{algorithmic}
\end{algorithm}

\subsection{Dijkstra's algorithm}
Dijkstra's algorithm \citep{dijkstra1959note} is a label-correcting method that works efficiently for graphs with nonnegative arc weights. 
Each round, the algorithm selects a vertex with the minimum potential value among the unscanned vertices to be scanned next. For graphs with nonnegative arc weights, once a vertex is scanned it need not be scanned again. The worst-case complexity of Dijkstra's algorithm in this case depends on the data structure used for finding the vertex with the minimum potential value: when implemented using arrays or linked lists, the algorithm runs in $O(n^2)$ time; when implemented using binary heaps, the algorithm runs in $O(m \lg n)$ time; when implemented using Fibonacci heaps, the algorithm runs in $O(m + n \lg n)$ time \citep{fredman1987fibonacci}.
We use the subroutine {\it Create-Priority-Queue}$(V,d)$ to initialize a priority queue (heap) $Q$ with the items in the list of vertices $V$ using the corresponding values in a list of potentials $d$. The subroutine {\it Extract-Min}$(Q)$ returns the item with the minimum value in $Q$. The subroutine {\it Decrease}$(Q,v,d[v])$ decreases the value corresponding to the vertex $v$ in $Q$ to its new potential value $d[v]$.

\algblockdefx{MRepeat}{EndRepeat}{\textbf{repeat}}{}
\algnotext{EndRepeat}

\begin{algorithm}[ht]
\caption{Dijkstra's Algorithm}
\begin{algorithmic}
\For{\texttt{all $u \in V$}}
        \State $d[u] \leftarrow \infty$
				\State $p[u] \leftarrow nil$
\EndFor			
\State $d[s] \leftarrow 0$
\State $Q$ $\gets$ Create-Priority-Queue($V, d$)
\While{\textit{$Q$ is not empty}}
        \State $u \gets$ Extract-Min($Q$)
				\For{\texttt{all $v \in u.adj[~]$ }}
            \If{$d[v] > \ell(u,v) + d[u]$}
							\State $d[v] \leftarrow \ell(u,v) + d[u]$
							\State $p[v] \leftarrow u$
							\State Decrease($Q,v,d[v]$)
            \EndIf
         \EndFor
\EndWhile
\end{algorithmic}
\end{algorithm}

\subsection{Shortest paths for acyclic graphs}
Topological sorting of an acyclic graph is to produce a linear ordering of the vertcies of the graph where $u$ appears before $v$ if $(u,v)$ is an arc in the graph. This can be done by keeping track of a list $L$ of the vertices having no ingoing arcs. At every iteration, an arbitrary vertex is removed from $L$ and appended to the output, and subsequently  all the arcs outgoing from this vertex are removed from the graph and $L$ is updated in accordance \citep{cormen1998introduction}. As the resulting graph is always acyclic, there is at least one vertex in $L$ until all the vertices are ordered. We use the subroutine {\it Vertices-Topological-Sort}$(G)$ to produce such an ordering from an acyclic graph $G$. 

The shortest paths in an acyclic graph with $n$ vertices and $m$ arcs, even with negative arcs, can then be found in $O(n+m)$ time using topological sorting \citep{cormen1998introduction}. Once the topological ordering is available, the vertices are scanned in order to find the shortest path distances. 

\begin{algorithm}
\caption{Acyclic-Shortest-Paths}
\begin{algorithmic}
\For{\texttt{all $u \in V$}}
        \State $d[u] \leftarrow \infty$
				\State $p[u] \leftarrow nil$
\EndFor			
\State $d[s] \leftarrow 0$
    \State $Q$ $\gets$ \textit{Vertices-Topological-Sort$(G)$}
    \While {$Q$ is not empty}
        \State  $u \gets$ \textit{dequeue$(Q)$}
        \For{\texttt{all $v \in u.adj[~]$ }}
        \If{$ d[v] > \ell(u,v) + d[u]$}
            \State $d[v] \gets \ell(u,v) + d[u]$
						\State $p[v] \gets u$
         \EndIf   
        \EndFor
    \EndWhile
		\State return($G$)
\end{algorithmic}
\end{algorithm}  

\subsection{Potentials and reduced weights}
Given a list of potential values $d$ corresponding to the list of vertices $V$, the reduced weight $l_d: A \to \mathbb{R}$ for an arc $(u,v) \in A$ is defined
\[\ell_d(u,v) = \ell(u,v) + d[u] - d[v].\]
Let $G_d$ be the graph $G$ with the reduced weights instead of the originals, and call it the {\it reduced-cost graph}.
For any path $P = \left< u_0,u_1,\dots,u_k \right>$ in $G$, let $\ell(P) = \sum_{i=1}^k \ell(u_{i-1},u_i)$ and $\ell_d(P) = \sum_{i=1}^k \ell_d(u_{i-1},u_i)$.
For a path $P$, it directly follows that $\ell_d(P) = \ell(P) + d[u_0] - d[u_k]$. 
Hence, the shortest path between any two vertices is the same for the graph $G$ and the reduced-cost graph $G_d$. 
For a cycle $C$, $\ell_d(C) = \ell(C)$.
Hence, the graph $G$ has a negative cycle if and only if $G_d$ has a negative cycle.
Our objective is to find the potentials that make the reduced weights for $G_d$ all nonnegative.
Once the reduced-cost graph has no negative weights, we can apply Dijkstra's algorithm to get the shortest-path tree from the given source. This tree is also the shortest-path tree for the original graph.

\subsection{The admissible subgraph and zero cycles}
A subgraph of the original graph $G$ that only includes the arcs with the nonpositive reduced weights is called the {\it admissible subgraph} $G^-$.
Typically, if there is a cycle in the admissible subgraph, then there is a negative cycle in the original graph.
The only exception is for a cycle with zero total weight, i.e., all its arcs have zero weight. 
In this case, since all the vertices on a zero cycle have the same shortest-path values, we aim to get rid of those cycles. As in \citep{goldberg1995scaling}, we can contract the vertices of the zero cycles after identifying those cycles in linear time by finding the strongly connected components of $G^-$ \citep{Tarjan72}. 
If there is a negative arc that connects two vertices of a strongly connected component of $G^-$, a negative cycle is found. We would therefore assume in the sequel that the zero cycles are first removed by our algorithm, and hence the admissible subgraph $G^-$ is acyclic as long as $G$ has no negative cycle.
Once we come up with potentials for the contracted graph resulting in nonnegative reduced weights for all the arcs, the reduction can be directly extended to the original graph.

\section{Preliminaries}
 
\subsection{Snakes}
In a graph with negative arc weights, we define a {\it snake} as a maximal-length segment that has zero or more consecutive arcs with zero weight followed by one arc with negative weight.
The zero-weight arcs form the {\it tail} of the snake, and the negative arc is its {\it head}. 
Obviously, if the target vertex of the head of a snake touches its tail, it forms a negative cycle.
The length (size) of a snake is the number of arcs it contains.
A snake is uniquely identified by its head, and may share parts of its tail with other snakes.
We call a segment with nonnegative arcs, that are not on a snake's tail, a {\it passive segment}.
In general, any path is composed of consecutive snakes possibly alternating with passive segments.

Throughout our algorithm, as a result of applying different potential values, the reduced weights on the arcs change, and accordingly the snakes in the reduced-cost graph would be dynamically changing their positions, lengths, and may as well disappear. 
Initially, every negative arc is a head of a snake.
We call the negative arc in the input graph where a snake originates at the beginning of the algorithm, the {\it origin} of the snake. 
If the reduced weight of the head of a snake becomes zero and that of a succeeding arc becomes negative, while those of the other arcs of its tail are still zeros, we say that the snake {\it expands}. 
A snake may expand on several succeeding arcs simultaneously forming several snakes.
If the reduced weight of the head of a snake becomes zero as well as the arcs of a succeeding passive segment up to the tail of another snake, we say that the first snake {\it connects} to the latter snake forming one larger snake.
Alternatively, if the reduced weight of the head of a snake becomes zero and the succeeding arc on a path is nonnegative, the snake becomes a passive segment on that path. 
If the snake becomes a passive segment on all the outgoing paths, we say that the snake {\it vanishes}.
With every iteration of our algorithm, every snake in the reduced-cost graph may: expand, connect to another snake, or vanish. 
%
%
%

\subsection{Insights}

The goal of our algorithm is to come up with a list of potential values such that the resulting reduced-cost graph would have no negative arcs, or otherwise the graph is shown to have a negative cycle.  
The potential values are initialized to zeros.
Every iteration, some of the potentials are changed implying changes in the reduced-cost graph.
In accordance, the negative weights are gradually pushed forward through different paths of the reduced-cost graph until they are mended, or until a negative cycle is discovered if exists. 
The good news is that all the negative weights disappear in at most $\min(2 \cdot\sqrt{n},2 \cdot \sqrt{m/\log n})$ iterations.

We shall use shortest-path computations to perform this task of pushing the negative weights forward, and hence eliminating them allover the graph.
It is possible to find the shortest paths within the admissible subgraph in linear time as it is acyclic (assuming that the graph has no negative cycles).  It is also possible to efficiently find shortest paths for a graph with nonnegative weights using Dijkstra's algorithm. We apply the two procedures alternately and repeatedly.
Before the end of every iteration, since the negative weights are obstacles for the upcoming calls of Dijkstra's algorithm, we need to gradually mend these negative weights. In accordance, we use the potential values to replace the current weights with the reduced weights, and subsequently reset the potentials to zeros.
Evidently, the weights change with every iteration. 

Consider the graph $G'$ formed by adding an artificial source vertex $s'$ with zero-weight arcs from $s'$ to the source vertices of the origins of the snakes in $G$. Let $\mathcal{T}$ be the shortest-path tree in $G'$ from $s'$. Let $\delta[x]$ be the value of the shortest path in $\mathcal{T}$ from $s'$ to vertex $x$.
The graph $G$ can be converted to a graph with nonnegative weights having the same shortest-path trees when setting the potential value for every vertex $x$ as $d[x] \gets \min\{\delta[x],0\}$ and replacing the actual weights by the nonnegative reduced weights, as indicated by the next lemma.

\begin{lemma}
\label{potentials}
By setting the potential value of every vertex $x$ to $d[x] \gets \min\{\delta[x],0\}$, the reduced weights of all the arcs are nonnegative.
\end{lemma}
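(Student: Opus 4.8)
The plan is to verify the defining inequality $\ell_d(u,v) = \ell(u,v) + d[u] - d[v] \ge 0$ directly for every arc $(u,v) \in A$, i.e. to show $d[v] - d[u] \le \ell(u,v)$, by combining the standard shortest-path triangle inequality with the clamping $d[x] = \min\{\delta[x],0\}$. First I would record two elementary facts. Since $\mathcal{T}$ is a shortest-path tree of $G'$ from $s'$ (and since $G$, hence $G'$, has no negative cycle, so the $\delta$-values are well defined), optimality of shortest paths gives the triangle inequality $\delta[v] \le \delta[u] + \ell(u,v)$ for every arc $(u,v) \in A$; all such arcs survive in $G'$ with unchanged weight. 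Second, from $d[x] = \min\{\delta[x],0\}$ we have both $d[x] \le \delta[x]$ and $d[x] \le 0$ for every vertex $x$.

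The decisive structural observation concerns the tail of a negative arc. If $\ell(u,v) < 0$, then $(u,v)$ is the head of a snake, and by the construction of $G'$ the source vertex of that snake's origin carries a zero-weight arc from $s'$, while the snake's tail (whose arcs are currently all zero-weight) leads from that source to $u$. Consequently $s'$ reaches $u$ along a path of weight at most $0$, so $\delta[u] \le 0$ and therefore $d[u] = \delta[u]$. This is exactly what prevents the clamp at $0$ from interfering with the triangle inequality at the tails of negative arcs.

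With these in hand I would split into two cases according to the sign of $\delta[u]$. If $\delta[u] \le 0$ (which, by the observation, always holds when $\ell(u,v) < 0$), then $d[u] = \delta[u]$ and
\[
d[v] \;\le\; \delta[v] \;\le\; \delta[u] + \ell(u,v) \;=\; d[u] + \ell(u,v),
\]
so $d[v] - d[u] \le \ell(u,v)$ as required. If instead $\delta[u] > 0$ (possible only when $\ell(u,v) \ge 0$, including the case where $u$ is unreachable from $s'$ and $\delta[u] = +\infty$), then $d[u] = 0$ while $d[v] \le 0 \le \ell(u,v)$, giving $d[v] - d[u] = d[v] \le \ell(u,v)$ again. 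In either case the reduced weight is nonnegative.

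I expect the only real content to lie in the structural observation of the second paragraph: the inequality is immediate for nonnegative arcs, where the clamp $d[v] \le 0$ does all the work, but for a negative arc one must know that its tail vertex $u$ satisfies $\delta[u] \le 0$, and this is not forced by the triangle inequality alone — it is guaranteed by the artificial source together with its zero-weight arcs to the source vertices of the origins. The main thing I would double-check is precisely this reachability claim (that $s'$ reaches every tail of a currently negative arc along a nonpositive path, so that $d[u] = \delta[u]$ there), and that unreachable vertices, where $\delta = +\infty$ and the clamp sets $d = 0$, are correctly absorbed into the $\delta[u] > 0$ case; both collapse into the same two inequalities displayed above.
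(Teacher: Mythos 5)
Your proof is correct and follows essentially the same route as the paper's: the triangle inequality $\delta[v]\le\delta[u]+\ell(u,v)$ combined with a case split on the sign of $\delta[u]$, where the crucial point in both arguments is that $\delta[u]>0$ forces $\ell(u,v)\ge 0$ because of the zero-weight arcs out of $s'$. You merely compress the paper's four cases into two by using $d[v]\le\min\{\delta[v],0\}$ to absorb the sub-cases on $\delta[v]$, and you are slightly more careful than the paper in justifying $\delta[u]\le 0$ at the tail of a negative arc via the zero-weight snake tail rather than assuming a direct arc $(s',u)$.
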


\begin{proof}
For any arc $(u,v)$ in the graph, the triangle-inequality property for the shortest paths indicates that $\delta[v] \leq \ell(u,v) + \delta[u]$. In all the following cases $\ell_d(u,v)$ turns out to be nonnegative.
\begin{enumerate}
\addtolength{\itemindent}{1cm}
\item[case 1.] $\delta[u] \leq 0$ and $\delta[v] \leq 0$: we set $d[u] \gets \delta[u]$ and $d[v] \gets \delta[v]$. Hence, $\ell_d(u,v) = \ell(u,v) + d[u] - d[v] = \ell(u,v) + \delta[u] - \delta[v] \geq 0$.
\item[case 2.] $\delta[u] \leq 0$ and $\delta[v]>0$: we set $d[u] \gets \delta[u]$ and $d[v] \gets 0$. Hence, $\ell_d(u,v) = \ell(u,v) + d[u] - d[v] = \ell(u,v) + \delta[u] \geq \delta[v]> 0$.
\item[case 3.] $\delta[u] > 0$: It must be the case that $\ell(u,v) \geq 0$, for otherwise an arc $(s',u)$ would have been added in $G'$ with $\ell(s',u)=0$ resulting in $\delta[u] \leq 0$. 
\vspace{-.1in}
\begin{enumerate}
\addtolength{\itemindent}{1.3cm}
 \item[case 3.1.] $\delta[v] \geq 0$: we set $d[u] \gets 0$ and $d[v] \gets 0$. Hence, $\ell_d(u,v) = \ell(u,v) + d[u] - d[v] = \ell(u,v) \geq 0$.
 \item[case 3.2.] $\delta[v] < 0$: we set $d[u] \gets 0$ and $d[v] \gets \delta[v]$. Hence, $\ell_d(u,v) = \ell(u,v) + d[u] - d[v] = \ell(u,v) - \delta[v] > 0$.
\end{enumerate}  
\vspace{-.1in}
\end{enumerate}  
\vspace{-.15in}
\end{proof}

While our procedures operate directly on the weights of the graph, the proof follows by tracing the snakes in the reduced-cost graph.
The intuition behind the proof is as follows. 
If there are no negative cycles, the negative weights are pushed forward on the paths of the shortest-path tree $\mathcal{T}$ until mended, over paths that obviously have at most $n-1$ arcs each.
In addition, the nonnegative arcs will remain nonnegative while negative arcs gradually become nonnegative throughout the algorithm.
After applying the first procedure (shortest paths for the admissible subgraph that is an acyclic graph), the negative reduced weights become nonnegative, while possible new negative reduced weights appear for the arcs following the paths of the admissible subgraph. In accordance, every snake of the reduced-cost graph either expands in size or converts to a passive segment. As a result, the negative arcs are either mended or pushed forward. 
After applying the second procedure (Dijkstra's algorithm on a subgraph with nonnegative weights),
at least the first alive snake with respect to the reduced-cost graph on every path of the shortest-path tree $\mathcal{T}$ either connects to the origin of the next snake forming a larger snake or converts to a passive segment. As a result, the heads of all the first snakes become farther from $s'$.

\section{The main procedures}

Our algorithm includes three subroutines that are executed consecutively and repeatedly until the reduced-cost graph has no negative arcs: the expansion procedure, the connection procedure, and the adjust-weights procedure. Every iteration, the potential values on all vertices are initially zeros and keep decreasing throughout the iteration.

\subsection{The expansion procedure}

The first procedure, which we call the {\it expansion procedure}, operates on the admissible subgraph $G^-$. When $G$ has no negative cycles, the admissible subgraph $G^-$ is acyclic. The expansion procedure finds the shortest-path values from the source vertices to every vertex within $G^-$. 

To find the shortest-path values in an acyclic graph in linear time, the vertices of the graph are topologically sorted by the subroutine \textit{Vertices-Topological-Sort$(G^-)$} forming a list of vertices $Q$. As a precondition to the expansion procedure, the potential values on all the vertices are set to zeros. The list $Q$ is scanned in order, and for each vertex $u$ in $Q$ and every vertex $v$ in the adjacency list of $u$, $u.adj[~]$, the arc $(u,v)$ is relaxed if $d[v] > \ell(u,v) + d[u]$ by setting $d[v] \leftarrow \ell(u,v) + d[u]$.

\begin{algorithm}
\begin{algorithmic}
\Procedure{Expand}{$G$}
    \State $Q$ $\gets$ \textit{Vertices-Topological-Sort$(G^-)$}
    \While {$Q$ is not empty}
        \State  $u \gets$ \textit{dequeue$(Q)$}
        \For{\texttt{all $v \in u.adj[~]$ }}
        \If{$ d[v] > \ell(u,v) + d[u]$ }
            \State $d[v] \gets \ell(u,v) + d[u]$
         \EndIf   
        \EndFor
    \EndWhile
		\State return($G$)
\EndProcedure
\end{algorithmic}
\end{algorithm}

The rough purpose of this procedure is to simultaneously push all the negative weights at least one arc forward, and hence expand the snakes of the resulting reduced-cost graph compared to the original. More precisely, all the negative arcs of the admissible subgraph become nonnegative while their positive immediate-successor arcs, if any, may become negative in return. In conclusion, every snake either expands or vanishes.

\begin{lemma}
\label{extend-shortest}
Consider any snake when applying the expansion procedure. On any specified path in the resulting reduced-cost graph, the snake either expands in size or converts to a passive segment.
\end{lemma}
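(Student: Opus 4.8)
The plan is to exploit the optimality (relaxation) property of the acyclic shortest-path computation that the expansion procedure performs on $G^-$. Recall that at the start of the iteration every potential is zero, so the reduced weights coincide with the current weights, the admissible subgraph $G^-$ consists exactly of the arcs of nonpositive weight, and each snake---its zero-weight tail together with its single negative head---lies entirely inside $G^-$. The acyclicity of $G^-$ (guaranteed under the no-negative-cycle assumption after zero cycles are contracted) is what legitimizes the topological scan and hence the whole argument.

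First I would establish the single fact that drives everything: after \textsc{Expand}$(G)$ terminates, every arc $(u,v)\in G^-$ satisfies $\ell_d(u,v)\geq 0$. This is the standard correctness of scanning an acyclic graph in topological order. When $u$ is dequeued, all arcs into $u$ have already been relaxed (their tails precede $u$ in the ordering), so $d[u]$ is final; the loop then relaxes $(u,v)$, enforcing $d[v]\leq d[u]+\ell(u,v)$, and since $d[v]$ can only decrease afterwards this inequality persists to the end. Rewriting it as $\ell_d(u,v)=\ell(u,v)+d[u]-d[v]\geq 0$ gives the claim. Applied to a snake, whose tail arcs and head arc all lie in $G^-$, this shows that every arc of the snake---in particular its negative head---acquires a nonnegative reduced weight, so the negative head is mended.

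Next I would locate where new negative reduced weights can appear, since this is exactly what distinguishes expansion from vanishing. An arc $(u,v)$ that turns negative must have had $\ell(u,v)>0$ (a nonpositive arc lies in $G^-$ and is already handled above), and $\ell_d(u,v)<0$ forces $d[u]<d[v]\leq 0$; thus $u$ must be the target of a strictly negative $G^-$-path, i.e. the tip of some snake's head. Hence new negatives emanate only from the tips of the snakes, precisely "following the paths of the admissible subgraph." Fixing a specified path that continues the snake past its head-target $v_k$, I would split on the first arc $(v_k,w)$ beyond the snake. If $\ell_d(v_k,w)<0$, the negativity of the mended head has been transferred one arc forward: the former head becomes a zero reduced-weight arc joining the tail and $(v_k,w)$ is the new head, so the snake expands. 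If instead every continuation arc is nonnegative, the snake has neither a negative head nor a negative successor and has therefore converted to a passive segment.

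The step I expect to be the main obstacle is making this dichotomy match the precise definitions of \emph{expand} and \emph{convert to passive}, namely that in the expanding case the head reduced weight is exactly zero and the tail reduced weights stay zero. Both follow from identifying the shortest-path tree of $G^-$ that realizes the potentials $d$: an arc has reduced weight exactly zero precisely when the $G^-$-shortest path to its target uses that arc, and strictly positive otherwise, in which case it is simply absorbed into a passive segment. I would argue that whenever $(v_k,w)$ turns negative it is because $d[v_k]$ attained its negative value through this very snake's head, forcing the head tight and the tail along that shortest path to remain zero; all complementary situations---head reduced weight strictly positive, or the most negative $G^-$-path into $v_k$ arriving through a different snake---leave the original snake's arcs nonnegative and thus fall under "converts to a passive segment." Once these boundary cases are pinned to the definitions, the lemma follows for an arbitrarily chosen snake along an arbitrarily chosen path.
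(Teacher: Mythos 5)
Your proposal is correct and follows essentially the same route as the paper's proof: you establish that every arc of $G^-$ attains nonnegative reduced weight via the shortest-path property of the topological scan, observe that a new negative reduced weight can arise only on an arc $(u,v)\notin G^-$ with $d[u]<0$ (i.e., immediately following a snake), and split into the expand/passive cases accordingly, including the case where a competing snake realizes the minimum potential at a shared vertex (the paper's ``snake loses a lower part of its tail'' paragraph). Your added remark identifying zero reduced weights with tight arcs of the $G^-$ shortest-path tree is just a more explicit rendering of the paper's claim that the reduced weights along the realizing path become zero.
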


\begin{proof}

Consider the reduced weight of any arc $(u, v)$ in $G$ after applying the expansion procedure. 
We have $\ell_{d}(u,v) = \ell(u,v) + d[u] - d[v]$.

If $(u, v)$ was in $G^-$, the expansion procedure guarantees the shortest-path property on the arcs of $G^-$ implying that $d[v] \leq \ell(u,v) + d[u]$, and so $\ell_{d}(u,v) \geq 0$. Hence, the reduced weights of all the arcs that were in $G^-$ become nonnegative.

If $(u, v)$ was not in $G^-$, we have $\ell(u,v) > 0$. 
Consider the case where $d[u] = 0$ after the expansion procedure, indicating that there is no path in $G^-$ with a snake on it ending at $u$. As all the potential values are nonpositive by Lemma \ref{potentials}, then $d[v] \leq 0$. Hence, $\ell_{d}(u,v) \geq \ell(u,v) >0$. The reduced weights of the positive arcs that do not follow a path of $G^-$ thus remain positive.

The only case where $\ell_{d}(u,v)$ possibly turns negative after the expansion procedure is if $(u,v)$ was not in $G^-$ and $d[u] < 0$. 
In such a case, there is path in $G^-$ ending up at $u$ with a snake on it. As the reduced weights of all the arcs on this path become zeros, the snake turns into a larger snake whose head is $(u,v)$. 
On the other hand, if $\ell_{d}(u,v)$ remains positive, this means that any path with a snake entering $u$ becomes a passive segment.

It is possible that a snake loses a lower part of its tail. Assume that $(u,v)$ was on the tail of a snake before the procedure, and another ingoing snake connects with the first at vertex $v$. As a result, the value of $d[v]$ may decrease and $\ell_d(u,v)$ may become positive in accordance. In such a case, this ingoing second snake connects to the upper part of the first, cutting the lower part of its tail being a passive segment. As a result, the first snake vanishes and the second expands. 
\end{proof}

\subsection{The connection procedure}

Complementing the expansion procedure, the resulting potential values are used in another shortest-path computation, but now for a graph with nonnegative weights.

In more details, the connection procedure operates on a graph $G^+$, resulting from $G'$ by ignoring the negative arcs (assume that their weights are infinity). Since all the weights of $G^+$ are nonnegative, we apply a modified version of Dijkstra's algorithm having the priority queue $Q$ initialized with the potential values that are the shortest-path values resulting from the expansion procedure. As is customary with the original algorithm, every iteration Dijkstra's algorithm selects the vertex with the minimum potential to be scanned next, using the subroutine {\it Extract-Min}$(Q)$, and if possible decreases the potentials of adjacent vertices in accordance. Once a vertex is scanned it is not scanned again.

\begin{algorithm}
\begin{algorithmic}
\Procedure{Connect}{$G$}
    \State Q $\gets$ Create-Priority-Queue($V, d$)
    \While{\textit{Q is not empty}}
        \State $u \gets$ Extract-Min($Q$)
        \For{\texttt{all $v \in u.adj[~]$ }}
            \If{$\ell(u,v) \geq 0$ and $d[v] > d[u] + \ell(u,v)$}
							\State $d[v] \leftarrow \ell(u,v) + d[u]$
							\State Decrease($Q,v,d[v]$)
						\EndIf
        \EndFor
        \EndWhile
			\State return ($G$)	
\EndProcedure
\end{algorithmic}
\end{algorithm}

The rough purpose of this procedure is to mend the first arc with negative reduced weight on every path of the shortest-path tree $\mathcal{T}$.
More precisely, the reduced weight of the head of the first snake on every path of the shortest path tree either becomes zero and connects to the head of the next snake in one larger snake, or otherwise becomes nonnegative and the snake vanishes. 

\begin{lemma}
\label{connect-shortest}
Consider the first snake on each path of the shortest-path tree $\mathcal{T}$ when applying the connection procedure. In the resulting reduced-cost graph, the snake either connects to the next snake or converts to a passive segment.
\end{lemma}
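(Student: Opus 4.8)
The plan is to read the connection procedure for what it literally is---Dijkstra's algorithm run on $G^+$, whose arcs are exactly the arcs of nonnegative original weight---and to lean on the standard shortest-path invariant of Dijkstra together with the outcome of Lemma~\ref{extend-shortest}. First I would record where the head of the first snake sits when the connection begins. By Lemma~\ref{extend-shortest} a snake that survives the expansion has expanded, and the proof of that lemma shows that the arc on which its head newly becomes negative was \emph{not} in $G^-$, hence has strictly positive original weight. Consequently, on a path $\Pi$ of $\mathcal{T}$ the head $h=(u,v)$ of the first alive snake belongs to $G^+$ and is therefore one of the arcs the connection actually relaxes.

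Next I would invoke the Dijkstra invariant. Since every arc of $G^+$ has nonnegative weight and the connection settles every vertex exactly once, at termination $d[y]\le d[x]+\ell(x,y)$ for every arc $(x,y)\in G^+$, that is $\ell_d(x,y)\ge 0$. Two consequences drive the whole argument. Applied to $h$ itself this gives $\ell_d(u,v)\ge 0$: the head of the first snake is mended. Applied to \emph{all} of $G^+$ it says the connection can never push the reduced weight of an originally nonnegative arc below zero; hence no fresh snake head can be born on a nonnegative arc, and any arc that is still negative in the reduced-cost graph after the connection must be an original negative arc, i.e.\ the origin of some snake. This already excludes the third conceivable fate---that the first snake merely \emph{expands}---and pins the outcome to the two alternatives in the statement.

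I would then split on the value of $\ell_d(u,v)$. If $\ell_d(u,v)>0$ the head is strictly mended; by the previous paragraph the arcs following it along $\Pi$ are nonnegative, so the segment loses its head entirely and becomes passive---the snake converts to a passive segment. If $\ell_d(u,v)=0$ then $d[v]=d[u]+\ell(u,v)$, meaning the connection settled $v$ through $h$; the head is absorbed into the tail and any remaining negativity must reappear at the next original negative arc $e'$ reached along $\Pi$. To call this a genuine connection I must show that the passive segment lying between $h$ and $e'$ collapses to all-zero reduced weight and that $e'$ is indeed un-mended, so that the two snakes fuse into one larger snake whose head is $e'$.

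The hard part is this last step, and it is where the tree $\mathcal{T}$ re-enters. I would argue by induction along $\Pi$ starting from $v$: each passive arc $(w_{i-1},w_i)$ is relaxed when $w_{i-1}$ is scanned, giving $d[w_i]\le d[w_{i-1}]+\ell(w_{i-1},w_i)$, and the reverse inequality---tightness---follows once one knows that the connection's shortest path to $w_i$ actually runs along $\Pi$. That alignment is the real obstacle: the distances the connection computes live in $G^+$, whereas $\Pi$ is a shortest path of $G'$. The saving observation is that every path of $G^+$ uses only nonnegative arcs, all of which are present in $G'$, so any $G^+$ route to a passive-segment vertex is also a $G'$ route and cannot beat the distance $\delta$ that $\mathcal{T}$ realises along $\Pi$; combined with the tightness of $\Pi$'s arcs with respect to $\delta$, this forces the computed potentials to coincide with $\delta$ on the active prefix of $\Pi$, the passive segment to become tight, and the source of $e'$ to be dragged down enough to re-expose $e'$ as negative. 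Turning this coincidence of the $G^+$-distances with the $G'$-distances defining $\mathcal{T}$ into a clean invariant is the crux; by contrast the head-mending and the impossibility of a new head on a nonnegative arc fall out immediately from the Dijkstra invariant.
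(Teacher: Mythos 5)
Your proposal is correct and lands on the same essential mechanism as the paper's proof: an induction along the path of $\mathcal{T}$ showing that the computed potentials coincide with the $\delta$-values on the segment running from the first snake's head to the next snake's origin (or to the first vertex with $\delta>0$), which forces the intervening arcs to zero reduced weight and yields exactly the dichotomy of the lemma. The difference is in how the two arguments drive that induction. The paper interleaves it with Dijkstra's scan order: it shows each vertex $v$ on the segment satisfies $d[v]=\delta[v]$ \emph{before $v$ is extracted}, using that $\delta$ is non-decreasing along nonnegative tree arcs so the predecessor is always scanned first. You instead front-load the global termination invariant of the connection procedure --- that $d[y]\le d[x]+\ell(x,y)$ holds at the end for every arc of $G^{+}$ --- which is precisely the paper's Lemma~\ref{remains-nonnegative}, stated and proved separately right after this one. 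Once you have that invariant, the ``crux'' you flag at the end is already closed by the ingredients you list: the lower bound $d[w]\ge\delta[w]$ (every relaxation, whether in the expansion on $G^{-}$ or the connection on $G^{+}$, certifies a genuine $G'$-walk from $s'$, so the potentials never undershoot $\delta$), the tightness $\delta[w_i]=\delta[w_{i-1}]+\ell(w_{i-1},w_i)$ of the tree arcs, and the termination inequality combine by a one-line induction to give $d[w_i]=\delta[w_i]$ along the segment, with no further appeal to the extraction order. Your route is arguably cleaner, since the scan-order reasoning is confined to the proof of the termination invariant rather than repeated inside this lemma; you also make explicit a point the paper leaves implicit, namely that by Lemma~\ref{extend-shortest} the head of the first alive snake sits on an arc of strictly positive original weight and is therefore actually relaxed by the connection procedure.
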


\begin{proof}

Consider a path $P$ on the shortest-path tree $\mathcal{T}$. The potential values on all the vertices of the first snake on $P$ are the shortest-path values within this call to the expansion procedure. These values will not change by the connection procedure, implying that the reduced weight of any arc $(u, v)$ of these arcs is $\ell_{d}(u,v) = \ell(u,v) + d[u] - d[v] = \ell(u,v) + \delta[u] - \delta[v] = 0$.

Let $x$ be the target vertex of the head of the first snake on $P$. We show by induction on the vertices of $P$ from $x$ up to the first vertex with positive shortest-path value or up to the source vertex of the head of the next snake on $P$, whichever closer, that the potential on each of these vertices will attain its shortest-path value before the vertex is scanned within this call to the connection procedure.
Following the expansion procedure, the potential on vertex $x$ attains the value of the shortest path to $x$, i.e. $d[x] = \delta[x]$, initially fulfilling the induction hypothesis.
Consider any arc $(u, v)$ on $P$ where $\ell(u,v) \geq 0$. From the shortest-path property, we have $\delta[v] = \ell(u,v) + \delta[u]$ implying that $\delta[v] \geq \delta[u] $. 

For any vertex $v$ with $\delta[v] \leq 0$, we have $d[v] \geq \delta[v]$ throughout the algorithm. The case where $d[v] = \delta[v]$ before $v$ is scanned directly fulfills the induction hypothesis. 
Consider the case where $d[v] > \delta[v]$, indicating that $d[v] > \delta[u]$.
Assume that the induction hypothesis is fulfilled up to vertex $u$. By the induction hypothesis, vertex $u$ will be scanned only when $d[u] = \delta[u]$, which is shown to be less than $d[v]$. As the connection procedure chooses the vertex with the minimum potential value to be scanned in each iteration, vertex $u$ will be scanned before vertex $v$. Once vertex $u$ is scanned, the potential on $v$ is settled to $d[v] = \ell(u,v) + d[u] = \ell(u,v) + \delta[u] = \delta[v]$, also fulfilling the induction hypothesis.   
In accordance, the potential values on all the vertices up to the source vertex of the head of the second snake on $P$ will reach their shortest-path values within this call to the connection procedure, as long as these shortest-path values are nonpositive. Consequently, for any of the arcs on this segment, $\ell_{d}(u,v)=\ell(u,v) + d[u] - d[v] = \ell(u,v) + \delta[u] - \delta[v] = 0$. 
It follows that the first two snakes on $P$ connect to form one larger snake in the reduced-cost graph.

For the case where $v$ is the first vertex on $P$ with $\delta[v]>0$, the connection procedure ends up with $d[u]=\delta[u]$ and $d[v]=0$ (by Lemma \ref{potentials}). Since $\delta[v]>0$ in this case, the shortest-path property implies $\ell(u,v) + \delta[u] > 0$. Consequently, $\ell_{d}(u,v) = \ell(u,v) + d[u] - d[v]= \ell(u,v) + \delta[u] > 0$. It follows that the first snake on $P$ converts into a passive segment in the reduced-cost graph.
\end{proof}

\begin{lemma}
\label{remains-nonnegative}
The arcs with nonnegative weights before the connection procedure attain nonnegative reduced weights after the procedure.
\end{lemma}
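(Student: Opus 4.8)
The plan is to observe that the connection procedure is simply Dijkstra's algorithm executed on the nonnegative arcs of the graph --- precisely the arcs its relaxation guard $\ell(u,v)\ge 0$ ever acts on --- started from the potentials $d$ produced by the expansion procedure. Under this reading, ``arcs with nonnegative weights'' means arcs with $\ell(u,v)\ge 0$ (the weight $\ell$ is fixed throughout the iteration, so it is the same entering and leaving the connection procedure), and the claim $\ell_d(u,v)\ge 0$ is exactly the standard Dijkstra inequality $d[v]\le d[u]+\ell(u,v)$. This is the sensible interpretation: the negative arcs are never relaxed here, so there is no hope of controlling their reduced weights within this procedure, and indeed Lemma \ref{connect-shortest} already accounts for their behaviour separately. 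Thus the proof reduces to re-deriving two classical Dijkstra facts in this mildly nonstandard setting.

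First I would show that the procedure extracts vertices in non-decreasing order of their final potentials. Writing $v_1,v_2,\dots$ for the extraction order with respective values $d_1,d_2,\dots$ at the moment of extraction, I would argue by induction that $d_i\le d_{i+1}$: when $v_i$ is extracted it is the minimum element of $Q$, so the current value of $v_{i+1}$ is at least $d_i$; and between the extraction of $v_i$ and that of $v_{i+1}$ the only scan performed is the scan of $v_i$, which can replace $v_{i+1}$'s value only by $d_i+\ell(v_i,v_{i+1})\ge d_i$. Hence $d_{i+1}\ge d_i$. The feature worth emphasising is that this uses nothing but the nonnegativity of the relaxed arcs, so it is completely insensitive to the expansion potentials sitting in $Q$ and to the negative arcs that are skipped.

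Second, I would fix an arc $(u,v)$ with $\ell(u,v)\ge 0$ and split on which endpoint is scanned first. If $u$ is scanned before $v$, the relaxation of $(u,v)$ at that scan enforces $d[v]\le d[u]+\ell(u,v)$; afterwards $d[u]$ is frozen (a scanned vertex is never updated again) while $d[v]$ can only decrease, so the inequality, and hence $\ell_d(u,v)\ge 0$, survives to termination. If $v$ is scanned before $u$, the monotone order gives $d[v]\le d[u]$, whence $d[v]\le d[u]\le d[u]+\ell(u,v)$ outright, and the later relaxation attempt from $u$ onto the already-scanned $v$ simply fails its guard, consistent with this. Since the expansion procedure leaves every potential finite (vertices it does not reach retain the value $0$) and Dijkstra keeps them finite, no $\infty-\infty$ ambiguity ever occurs, and every vertex is eventually extracted because the loop runs until $Q$ empties.

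The only genuinely substantive step is the first one: I must be confident that the two Dijkstra guarantees --- the monotone extraction order and the finality of a potential once its vertex is scanned --- really do persist despite (i) seeding $Q$ with the negative expansion potentials rather than the canonical ``$0$ at the source, $\infty$ elsewhere'' labelling and (ii) the presence of negative arcs that the procedure deliberately ignores. The resolution, which I would make explicit, is that both guarantees rest solely on relaxing nonnegative arcs and are therefore immune to the starting labels and to the skipped arcs; everything else is routine bookkeeping. I would close by noting that this lemma is the exact complement of Lemma \ref{connect-shortest}: together they certify that after the connection procedure the only arcs that can still carry negative reduced weight are original negative arcs, which is what permits the snakes to reform and advance on the next iteration.
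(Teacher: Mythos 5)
Your proposal is correct and follows essentially the same route as the paper's proof: both arguments reduce the claim to the standard Dijkstra facts that, when only nonnegative arcs are relaxed, vertices are extracted in non-decreasing order of potential and a scanned vertex's potential is final, so the relaxation inequality $d[v]\le d[u]+\ell(u,v)$ established at $u$'s scan survives to termination. The paper packages this as a contradiction (supposing $d[u]$ decreased after $u$ was scanned) while you split on which endpoint is scanned first, but the key observation is identical.
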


\begin{proof}
Consider an arc $(u,v)$ where $\ell(u,v) \geq 0$. Every vertex is scanned exactly once within the connection procedure. When $u$ is scanned, $(u,v)$ is relaxed and $d[v] \leq \ell(u,v) + d[u]$ implying $\ell_d(u,v) \geq 0$. Since all potentials never increase within the connection procedure, the only way for $\ell_d(u,v)$ to presumably end up negative is that $d[u]$ would decrease after $u$ is scanned. For this to happen, there must exist a vertex $q$ with $\ell(q,u) \geq 0$ such that $q$ is scanned after $u$. 
The connection procedure scans the vertex with the minimum potential in every iteration and only deals with nonnegative arcs. It follows that the potentials of the scanned vertices form a non-decreasing sequence. In particular, $d[u] \leq d[q]$ when $q$ is scanned. But then, relaxing $(q,u)$ would result in $d[u] \leftarrow \ell(q,u) + d[q]$. Since $\ell(q,u) \geq 0$, $d[u]$ would never decrease after $u$ is scanned. This contradiction ensures that $\ell_d(u,v) \geq 0$ after the procedure. 
\end{proof}

\subsection{Adjusting weights}

In this procedure, the current weights of the graph $G$ are replaced with the reduced weights using the list of potential values $d$. All the potential values are then reset to zeros. 
%
In accordance, the reduced weights remain the same. 

\begin{algorithm}
\begin{algorithmic}
\Procedure{Adjust-Weights}{$G$}
    \For{\texttt{all $(u,v) \in E$}}
        \State $\ell(u,v) \leftarrow \ell(u,v) + d[u] - d[v]$
    \EndFor
		\For{\texttt{all $u \in V$}}
        \State $d[u] \leftarrow 0$
    \EndFor
		\State return($G$)
\EndProcedure
\end{algorithmic}
\end{algorithm}

\begin{lemma}
\label{change-weights}
When applying the adjust-weights procedure, after the expansion and the connection procedures, 
all the arcs with nonnegative weights before the two procedures remain nonnegative.
\end{lemma}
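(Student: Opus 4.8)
The plan is to reduce this statement to Lemma \ref{remains-nonnegative}, which already controls the connection procedure, by first pinning down precisely what the hypothesis and conclusion mean in terms of the potential array. The key preliminary observation is that neither the expansion procedure nor the connection procedure ever touches the arc weights $\ell$: both procedures only read $\ell$ and write to the potential array $d$, and only the adjust-weights procedure rewrites the weights, and it does so at the very end. Consequently, an arc $(u,v)$ whose weight is nonnegative before the expansion procedure has the identical weight $\ell(u,v) \geq 0$ at the moment the connection procedure begins. In other words, the phrase ``nonnegative before the two procedures'' is synonymous with ``nonnegative before the connection procedure,'' which is exactly the hypothesis of Lemma \ref{remains-nonnegative}.

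Next I would identify the final potential array and invoke the earlier lemma. Let $d$ denote the potentials at the end of the connection procedure, equivalently at the start of adjust-weights. By Lemma \ref{remains-nonnegative}, every arc $(u,v)$ with $\ell(u,v) \geq 0$ satisfies $\ell_d(u,v) = \ell(u,v) + d[u] - d[v] \geq 0$ with respect to this $d$. The adjust-weights procedure then sets the new weight of $(u,v)$ to exactly $\ell(u,v) + d[u] - d[v]$, i.e.\ to $\ell_d(u,v)$, and resets the potentials to zero. Hence the new weight of $(u,v)$ equals a quantity already shown to be nonnegative, so $(u,v)$ remains nonnegative, which is the claim.

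The point that requires care, and where I expect the real work to sit, is the implicit worry that the expansion procedure may have driven the reduced weight of a positive arc negative before connection even begins; this is precisely the ``a snake expands onto a positive successor arc'' phenomenon described after Lemma \ref{extend-shortest}. I would emphasize that this is harmless here: because the weights themselves are untouched, such an arc still enters the connection procedure as a genuine nonnegative arc and is therefore relaxed when its tail $u$ is scanned, so Lemma \ref{remains-nonnegative} applies verbatim. The subtlety underneath that lemma that I would keep in mind is that its non-decreasing-scanned-potentials argument must remain valid even though the connection procedure starts from the nonpositive potentials produced by expansion rather than from a single zeroed source; since every arc relaxed by connection is nonnegative, the extracted-potential sequence is non-decreasing regardless of the initial labels, so the argument carries over unchanged and the conclusion follows.
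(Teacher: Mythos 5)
Your proposal is correct and follows essentially the same route as the paper: both reduce the claim to Lemma \ref{remains-nonnegative} and then observe that adjust-weights merely installs the (already nonnegative) reduced weights as the new weights. Your additional care in noting that expansion and connection never modify $\ell$ itself --- so the hypothesis of Lemma \ref{remains-nonnegative} is genuinely met --- is a worthwhile clarification, but it does not change the argument.
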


\begin{proof}
By Lemma \ref{remains-nonnegative}, when applying the expansion followed by the connection procedures, the reduced weights of the nonnegative arcs remain nonnegative. Hence, the new weights of these arcs become nonnegative when using the reduced weights to replace the current weights.
\end{proof}

\section{The snakes algorithm}

The expansion and connection procedures offer great machinery to handle snakes. The former makes each snake longer or coverts it into a passive segment, while the latter connects the first surviving snake on each path with the origin of another snake or coverts it into a passive segment. 

In combination, the two procedures expand each snake by at least one arc and connect the first snake to an origin, or otherwise convert the snake to a passive segment.
Obviously, we shall repeatedly execute an interleaved combination of the two procedures until the reduced-cost graph has no negative weights. We prove that the combination will be executed at most $\sqrt{2n}$ times.

\algblockdefx{MRepeat}{EndRepeat}{\textbf{repeat}}{}
\algnotext{EndRepeat}

\begin{algorithm}[ht]
\caption{The Snakes Algorithm}
\begin{algorithmic}[1]
\For{\texttt{all $u \in V$}}
        \State $d[u] \leftarrow 0$
\EndFor			
\While{$G$ still has negative arcs}
\State $G \gets \Call{Expand}{G}$ 
\State $G \gets \Call{Connect}{G}$ 
\State $G \gets \Call{Adjust-Weights}{G}$ 
\EndWhile
\end{algorithmic}
\end{algorithm}

\begin{lemma}
\label{j-times}
After  $j-1$ iterations of the algorithm, the length of every snake will be at least $j$.
\end{lemma}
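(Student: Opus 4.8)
The plan is to proceed by induction on $j$, tracking the length of each snake in the current reduced-cost graph across successive iterations of the main while loop, where one iteration consists of a call to Expand, then Connect, then Adjust-Weights. The two structural lemmas already proved for the individual procedures, Lemma \ref{extend-shortest} and Lemma \ref{connect-shortest}, supply everything I need; the induction merely chains them together and verifies that the length of a surviving snake can only grow.

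For the base case $j=1$, the statement reads: after $0$ iterations every snake has length at least $1$. This holds immediately from the definition, since a snake is a (possibly empty) tail of zero-weight arcs terminated by a single negative head, and initially every negative arc of the input is itself the head of a snake, so each snake contains at least its head arc.

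For the inductive step I assume that after $j-1$ iterations every snake has length at least $j$, and I analyze the $j$-th iteration procedure by procedure. First, by Lemma \ref{extend-shortest}, on every path each snake either expands in size or converts to a passive segment when Expand is applied; since an expansion turns the old head into a zero-weight tail arc and advances the head one arc forward, a surviving snake gains at least one arc, so by the induction hypothesis it now has length at least $j+1$. Next, by Lemma \ref{connect-shortest}, the first snake on each path of $\mathcal{T}$ either connects to the next snake, which merely merges two snakes into a strictly longer one, or converts to a passive segment, while snakes that are not first on any path keep their length; hence Connect never shortens a surviving snake. Finally, Adjust-Weights only substitutes the reduced weights for the current weights and resets the potentials to zero, leaving every reduced weight, and therefore every snake, unchanged. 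Chaining these observations, each snake still present after the $j$-th iteration has length at least $j+1$, which is precisely the claim for $j+1$, closing the induction.

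The step I expect to be the main obstacle is verifying that no surviving snake can become shorter during an iteration. The delicate case is the tail-cutting phenomenon flagged in the proof of Lemma \ref{extend-shortest}, where a second, ingoing snake connects to the upper part of a snake and severs the lower portion of its tail. I would resolve this by adopting the same bookkeeping already implicit there: in such a situation the original snake is regarded as vanishing while the ingoing snake is the one that expands, so that every snake present after the iteration is the descendant of a snake that strictly expanded, never of one that lost arcs. Once this accounting is pinned down, the per-iteration length increase of at least one is guaranteed, and the induction goes through.
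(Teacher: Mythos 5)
Your overall strategy --- induction on $j$, using Lemma \ref{extend-shortest} for the Expand step and Lemma \ref{connect-shortest} for the Connect step, with the convention that a snake whose tail is cut is counted as vanished while the ingoing snake is the one credited with expanding --- is the same as the paper's, and your base case and your treatment of Expand and Adjust-Weights match the paper's reasoning. There are, however, two places where your assertions about the Connect step outrun what the cited lemmas actually give you.

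First, you claim that ``snakes that are not first on any path keep their length'' under Connect, but Lemma \ref{connect-shortest} says nothing about non-first snakes, and Connect modifies potentials throughout the graph, so the reduced weights of arcs far from the first snakes do change; in particular $d[u]$ can decrease for the source $u$ of a head that Expand just mended, making that arc negative again. The paper handles this arc-by-arc rather than snake-by-snake: it fixes an arc $(u,v)$ that was negative before Expand, notes that it is nonnegative after Expand, and observes that the only way it can be negative again at the end of the iteration is that Connect decreased $d[u]$, which can happen only because another (already long) snake was extended through $u$; that snake then inherits $(u,v)$ as its new head and is strictly longer, so the length bound is preserved. Second, your argument never rules out that Connect turns a previously nonnegative arc negative, which would create a brand-new snake of length $1$ and break the induction; the paper closes this with Lemma \ref{remains-nonnegative}, which you should invoke explicitly. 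Both gaps are fillable from results already in the paper, but as written your inductive step does not go through without them.
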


\begin{proof}
Initially, every negative arc is the head of a snake with length at least 1. 
By Lemma \ref{extend-shortest}, in the reduced-cost graph, the length of every alive snake increases with every iteration. 

Consider an arc $(u,v)$ where $\ell(u,v) < 0$ just before executing the expansion procedure. As a result of the expansion procedure, by Lemma \ref{extend-shortest}, we get $\ell_d(u,v) \geq 0$. The only way for the arc to attain a negative weight at the end of the iteration is that $d[u]$ decreases by the connection procedure. For this to happen, there must exist another snake that would be extended by the connection procedure such that its body reachs $u$ resulting in $(u,v)$ being its new head. As a result, this other snake inherits the head of the fist and is now longer.  

In addition, by Lemma \ref{remains-nonnegative}, no nonnegative arcs turn negative by the connection procedure, and hence no new snakes are created throughout the algorithm.
\end{proof}

\begin{theorem}
We can convert a graph with negative weights, but no negative cycle, to a graph that has no negative weights, such that the two graphs have the same shortest-path tree, using less than $\sqrt{2n}$ calls to a linear-time shortest-path computation for an acyclic graph and to a modified version of Dijkstra's algorithm, where $n$ is the number of vertices of the graph.
\end{theorem}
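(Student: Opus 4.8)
The plan is to split the statement into two independent claims and prove them separately: first, that upon termination the loop really has produced a weight-equivalent graph with no negative arc, and second, that the loop body runs fewer than $\sqrt{2n}$ times. For the correctness half I would lean entirely on the already-established invariants. The reduced-weight identity $\ell_d(P)=\ell(P)+d[u_0]-d[u_k]$ recorded in the \emph{Potentials and reduced weights} subsection shows that replacing weights by reduced weights (as \textsc{Adjust-Weights} does) never changes which trees are shortest, so the output graph has exactly the shortest-path trees of $G$. Lemma~\ref{change-weights}, via Lemma~\ref{remains-nonnegative}, shows that an arc which is nonnegative at the start of a round is nonnegative after the \textsc{Expand}/\textsc{Connect}/\textsc{Adjust-Weights} round, so no new negative arc is ever created; hence when the loop guard fails the graph is genuinely nonnegative. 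Since each iteration issues exactly one acyclic shortest-path call (\textsc{Expand}) and one modified-Dijkstra call (\textsc{Connect}), everything reduces to bounding the number of iterations.

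For the count, let $t$ be the number of times the loop body executes. Because the loop did not stop earlier, after $t-1$ iterations the reduced-cost graph still contains a negative arc, i.e.\ a live snake $S$. From Lemma~\ref{j-times} I would record the running fact that after $i-1$ iterations every surviving snake has length at least $i$. The heart of the argument is a charging scheme that follows $S$ backward through the iterations. By Lemmas~\ref{extend-shortest} and~\ref{connect-shortest}, in each round a surviving first snake either merely expands (gaining at least one arc) or connects to the origin of the next snake, swallowing that entire neighbouring snake. The snake swallowed during iteration $i$ had itself survived $i-1$ iterations, so by Lemma~\ref{j-times} it donates at least $i$ fresh arcs to $S$.

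The decisive point is that the arcs donated in different iterations are pairwise arc-disjoint. Since $G$ has no negative cycle, each snake is a simple path, so no arc is double-counted inside $S$; and an arc that ever leaves a snake's tail turns nonnegative and, by Lemma~\ref{remains-nonnegative}, stays nonnegative thereafter, so it cannot be recharged as the tracked snake's head. Summing the per-iteration contributions then gives $|S|\ge 1+2+\cdots+(t-1)=t(t-1)/2$. As $S$ is a simple path it uses at most $n$ vertices and hence at most $n-1$ arcs, so $t(t-1)/2\le n-1$, which yields $t<\sqrt{2n}$.

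I expect the disjoint-charging step to be the main obstacle. The delicate items are: verifying that both the ``expand'' and the ``connect'' alternatives of Lemmas~\ref{extend-shortest} and~\ref{connect-shortest} deposit previously-uncounted arcs; handling the shedding of the lower part of a tail noted at the end of the proof of Lemma~\ref{extend-shortest}, which is precisely where Lemma~\ref{remains-nonnegative} must be invoked to forbid an arc from re-entering a snake; and pinning down the summation index so the triangular number is $t(t-1)/2$ rather than $t(t+1)/2$, since this controls whether the constant is the $\sqrt{2n}$ asserted here or the $2\sqrt{n}$ quoted in the abstract. An alternative route, which I would keep in reserve, fixes a single root-to-leaf path of $\mathcal{T}$ carrying the maximum number $k$ of simultaneous (hence arc-disjoint, being collinear) snakes, counts $\ge (k-i+1)\,i$ arcs after $i-1$ iterations, and optimizes over $i$ to get $k\le 2\sqrt{n}$ with $t\le k$; this is cleaner but loses the factor, so I would present the single-snake accumulation as the primary argument.
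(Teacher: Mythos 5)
Your proposal is correct and follows essentially the same route as the paper: both arguments rest on Lemma~\ref{j-times} (every surviving snake has length at least $j$ after $j-1$ iterations) combined with Lemma~\ref{connect-shortest} (the first snake on each path of $\mathcal{T}$ either swallows the next snake or vanishes), and both sum $1+2+\cdots$ to obtain quadratic progress and hence fewer than $\sqrt{2n}$ iterations. The only difference is bookkeeping: the paper tracks the distance of the first snake's head from $s'$ along a path of $\mathcal{T}$, a monotone quantity that makes your delicate arc-disjointness step automatic, whereas you track accumulated snake length and must argue disjointness separately --- your reserve observation that all donated arcs lie on a single root-to-leaf path of $\mathcal{T}$ is precisely the paper's measure.
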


\begin{proof}
We distinguish between two types of snakes in the reduced-cost graph: the snakes that are the first on the paths of the shortest-path tree $\mathcal{T}$ and the other snakes. 
We show by mathematical induction that before iteration $j$ the heads of the first snakes are at a distance at least $j(j-1)/2$ from $s'$ on every path of the shortest-path tree $\mathcal{T}$.
The hypothesis is obviously true for $j=1$. 

Assume the hypothesis holds after iteration $j-1$ and consider the changes at iteration $j$.
Obviously, the origin of the second snake on any path follows the head of the first, i.e. at a distance at least $j(j-1)/2$ from $s'$ by the induction hypothesis.
By Lemma \ref{j-times}, the length of every snake after iteration $j-1$ is at least $j$. 
By Lemma \ref{connect-shortest}, the connection procedure ensures that each of the first snakes on every path of $\mathcal{T}$ either connects to the second snake forming one snake or vanishes. The lemma also indicates that the potentials on the vertices of the first snakes have already reached their shortest-path values, implying that no first snake would ever be connected to another first snake.
%
Once a negative arc is mended its actual weight will forever turn nonnegative, as indicated by Lemma \ref{change-weights}. This will permit the snakes not to be obstructed by negative arcs that are already mended.
It follows that the heads of the first snakes after the $j$th iteration must be at distance at least $j(j-1)/2 + j = j(j+1)/2$ from $s'$. The induction hypothesis then holds after iteration $j$. 

Let $j'$ be the number of iterations executed by the algorithm.  As there are at most $n$ levels on any path in the graph, it follows that $j'(j'+1)/2 \leq n$ indicating that $j'$ is less than $\sqrt{2n}$, and the algorithm must have been terminated by then with a graph that has no negative arcs.
\end{proof}

\begin{corollary}
We can convert a graph with negative weights to a graph that has all nonnegative weights, such that the two graphs have the same shortest-path tree starting from any source, using $O(\sqrt{n} \cdot m + n^{3/2} \cdot \log n)$ time when implementing the connection procedure using Fibonacci heaps, where $n$ is the number of vertices and $m$ is the number of arcs of the underlying graph.
\end{corollary}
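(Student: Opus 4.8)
The plan is to combine the iteration bound from the Theorem with a per-iteration cost analysis, and then to verify separately that the accumulated weight transformation preserves shortest-path trees from every source. First I would invoke the Theorem, which guarantees that the main while loop of the snakes algorithm terminates after fewer than $\sqrt{2n}$ iterations, producing a graph with no negative arcs. It then suffices to bound the work done in a single iteration and multiply.

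For the per-iteration cost I would bound the three procedures separately. The expansion procedure performs a topological sort of the acyclic admissible subgraph $G^-$ followed by a single in-order scan, so it costs $O(n+m)$. The auxiliary preprocessing — identifying and contracting the zero cycles via strongly connected components of $G^-$ using Tarjan's linear-time algorithm, and forming $G'$ by attaching the artificial source $s'$ with zero-weight arcs to the snake origins — is likewise $O(n+m)$. The adjust-weights procedure touches each arc and each vertex once, costing $O(n+m)$. The dominant term is the connection procedure, which is a run of the modified Dijkstra's algorithm on the nonnegative-weight graph $G^+$; implemented with Fibonacci heaps this takes $O(m + n\log n)$ amortized time. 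Hence one iteration costs $O(m + n\log n)$, and multiplying by the iteration bound gives total time less than $\sqrt{2n}\cdot O(m+n\log n)=O(\sqrt{n}\cdot m + n^{3/2}\log n)$, as claimed.

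For the ``from any source'' strengthening I would appeal to the reduced-weight identity from the Potentials subsection: for a potential list $d$ and any path $P$ from $u_0$ to $u_k$, $\ell_d(P)=\ell(P)+d[u_0]-d[u_k]$, which shifts the length of every $u_0$-to-$u_k$ path by the same constant and therefore preserves the set of shortest paths between each fixed pair. Since each call to adjust-weights replaces the current weights by their reduction, composing the reductions across all iterations is itself a single reduction of the original weights with respect to the accumulated potential $D=\sum_t d^{(t)}$; by induction on the iterations, the final weights equal $\ell(u,v)+D[u]-D[v]$. Consequently the nonnegative-weight output agrees with the original graph on shortest paths for all pairs, and in particular has identical shortest-path trees from any source. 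I would also note that un-contracting the zero cycles extends the reduction back to the original vertices without disturbing this invariant, since all vertices of a contracted cycle share the same shortest-path value.

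The main obstacle I expect is the careful bookkeeping rather than any deep step: confirming that the auxiliary operations (zero-cycle contraction and expansion, and the construction of $G'$ with its artificial source and arcs to the origins) are genuinely linear per iteration, so that the connection procedure's $O(m+n\log n)$ remains the bottleneck; and verifying that the accumulated-potential argument correctly carries the all-pairs shortest-path invariance through the repeated weight rewrites performed by adjust-weights. Once the per-iteration cost is pinned down, the running-time multiplication and the passage from the Theorem's fixed-source statement to the corollary's any-source statement are routine.
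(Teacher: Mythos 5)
Your proposal is correct and follows exactly the argument the paper intends (the corollary is stated without an explicit proof): fewer than $\sqrt{2n}$ iterations by the preceding theorem, each costing $O(m + n\log n)$ with the Fibonacci-heap implementation of the connection procedure dominating, and shortest-path preservation via the reduced-weight identity composed across iterations. Your accumulated-potential observation and the per-iteration accounting of the auxiliary linear-time steps are exactly the routine bookkeeping the paper leaves implicit.
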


\section{Improvement}

To improve the worst-case asymptotic running time of the algorithm even further, we execute the expansion procedure 
$c=\lceil \frac{n \log n}{m} \rceil$ times within every iteration of the algorithm. In accordance, the lengths of the snakes grow by at least this amount in every iteration. 
In effect, the bound on the number of iterations for the algorithm to terminate would be improved to $O(\min(\sqrt{n}, \sqrt{\frac{m}{\log n}}))$.

\algblockdefx{MRepeat}{EndRepeat}{\textbf{repeat}}{}
\algnotext{EndRepeat}

\begin{algorithm}[ht]
\caption{The Improved Snakes Algorithm}
\begin{algorithmic}[1]
\For{\texttt{all $u \in V$}}
        \State $d[u] \leftarrow 0$
\EndFor			
\While{$G$ still has negative arcs}
\MRepeat{$\lceil \frac{n \log n}{m} \rceil$} times
	\State $G \gets \Call{Expand}{G}$ 
\EndRepeat	
\State $G \gets \Call{Connect}{G}$ 
\State $G \gets \Call{Adjust-Weights}{G}$ 
\EndWhile
\end{algorithmic}
\end{algorithm}

\begin{theorem}
We can convert a graph with negative weights to a graph that has all nonnegative weights, such that the two graphs have the same shortest-path tree starting from any source, using $O(\sqrt{n} \cdot m + n \cdot \sqrt{m \log n})$  time when implementing Dijkstra's algorithm using Fibonacci heaps, where $n$ is the number of vertices and $m$ is the number of arcs of the underlying graph.
\end{theorem}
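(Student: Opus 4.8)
The plan is to reuse the structure of the proof for the basic Snakes Algorithm, changing only two things: re-deriving the per-iteration snake growth now that \textsc{Expand} is invoked $c=\lceil (n\log n)/m\rceil$ times per iteration, and then charging the total work using \emph{two different} upper bounds on the number of iterations $j'$ — a weaker one for the term involving $m$ and a stronger one for the term involving $n\log n$ — so that the stated time bound falls out.

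First I would generalize Lemma~\ref{j-times}. Since no call to \textsc{Adjust-Weights} is interposed between the $c$ consecutive expansions inside one iteration, the potentials accumulate, so Lemma~\ref{extend-shortest} applies to each expansion in turn and each advances the head of every surviving snake by at least one arc; hence a snake surviving an entire iteration grows by at least $c$ arcs, while neither the connection procedure (Lemma~\ref{connect-shortest}) nor \textsc{Adjust-Weights} (Lemma~\ref{change-weights}, via Lemma~\ref{remains-nonnegative}) shortens a snake or creates a new one. I would thereby establish that after $j-1$ improved iterations every snake has length at least $1+(j-1)c$. Next I would redo the distance induction of the basic theorem with this growth rate, proving that before iteration $j$ the head of every first snake on a path of $\mathcal{T}$ sits at distance at least $c\,j(j-1)/2$ from $s'$: by Lemma~\ref{connect-shortest} the first snake either connects to the second (whose origin trails the first head, hence lies at the inductively guaranteed distance) or vanishes, and by the generalized Lemma~\ref{j-times} the second snake has length at least $(j-1)c$, so the connection advances the head by at least that much. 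Because any path has at most $n$ levels, $c\,j'(j'-1)/2 \le n$, giving $j'=O(\sqrt{n/c})$; using $c\ge 1$ this is $O(\sqrt n)$, and using $c\ge (n\log n)/m$ it is $O(\sqrt{m/\log n})$, so $j'=O(\min(\sqrt n,\sqrt{m/\log n}))$.

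It then remains to assemble the running time, which is the crux. Each iteration costs $O(cm)$ for the $c$ expansions — and since $cm\le n\log n+m$ this is $O(m+n\log n)$ — plus $O(m+n\log n)$ for one Fibonacci-heap run of the connection procedure and $O(m)$ for \textsc{Adjust-Weights}, hence $O(m+n\log n)$ per iteration, and $O(j'm+j'\,n\log n)$ over all iterations. I would then bound the two summands with \emph{different} estimates of $j'$: the first summand with $j'=O(\sqrt n)$ to obtain $O(\sqrt n\,m)$, and the second with $j'=O(\sqrt{m/\log n})$ to obtain $O(\sqrt{m/\log n}\cdot n\log n)=O(n\sqrt{m\log n})$; adding these yields the claimed $O(\sqrt n\,m+n\sqrt{m\log n})$. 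A final Dijkstra run on the resulting nonnegative reduced-cost graph, whose shortest-path trees coincide with the original's by the reduced-weight identity, costs $O(m+n\log n)$ and is absorbed.

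The step I expect to be the main obstacle is justifying cleanly that the $c$ in-iteration expansions truly deliver a growth of $c$ arcs per surviving snake — that successive expansions without interleaved weight adjustment keep advancing the \emph{same} snake rather than stalling — and pinning down the exact constant so that the balance $c=\lceil (n\log n)/m\rceil$ makes the two charged terms equal in order. The remainder is essentially the basic theorem recomputed with $j(j-1)/2$ replaced by $c\,j(j-1)/2$, together with the two-sided bounding of $j'$.
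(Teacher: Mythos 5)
Your proposal is correct and takes essentially the same route as the paper: the same generalized snake-growth and distance induction giving $j' = O(\min(\sqrt{n},\sqrt{m/\log n}))$, and the same per-iteration cost of $O(m + n\log n)$. The only difference is cosmetic bookkeeping --- the paper splits into the cases $n\log n \le m$ and $n\log n > m$ while you bound the summands $j'\cdot m$ and $j'\cdot n\log n$ with the two different estimates of $j'$ --- and your explicit flagging of the subtlety that the $c$ consecutive expansions run on accumulated (unreset) potentials is, if anything, more careful than the paper, which asserts the per-iteration growth of $c$ without comment.
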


\begin{proof} 
If $n \log n \leq m$, $c=\lceil \frac{n \log n}{m} \rceil=1$ and the algorithm is the same as before indicating that the total running time is $O(\sqrt{n} \cdot m + n^{3/2} \log n) = O(\sqrt{n} \cdot m)$.

The length of every snake is at least $c \cdot j$ after iteration $j-1$. In accordance, the heads of the first snakes after the $j$th iteration must be at distance at least $c \cdot j(j+1)/2$ from $s'$.
It follows that $c \cdot j'(j'+1)/2 \leq n$ indicating that $j' \leq \sqrt{2n/c} = O(\sqrt{m/\log n})$.
The running time consumed by the expansion procedure in every iteration is $O(m \cdot \frac{n \log n}{m}) = O(n \log n)$.
Using Fibonacci heaps, the running time of the connection procedure per iteration is $O(m + n \log n)$. 

If $n \log n > m$, the running time per iteration is $O(n \log n)$. The total running time of the algorithm in this case is therefore $O(n \log n \cdot \sqrt{m/ \log n}) = O(n \cdot \sqrt{m \log n})$.
\end{proof}

\bibliographystyle{abbrv}
\bibliography{archive}
\end{document}